\journal{Theoretical Computer SCience}
\newtheorem{theorem}{Theorem}
\newtheorem{lemma}[theorem]{Lemma}
\newtheorem{corollary}[theorem]{Corollary}
\newdefinition{remark}{Remark}
\newproof{proof}{Proof}
\newcommand{\minperiod}{\mathit{per}}
\newcommand{\pathstr}{\mathit{str}}
\newcommand{\triepath}[2]{#1 \rightsquigarrow #2}
\newcommand{\ISA}{\mathit{ISA}}
\newcommand{\NSV}{\mathit{NSV}}
\newcommand{\Lroot}{\textsf{L}-root}
\newcommand{\Parent}{\mathit{par}}
\newcommand{\Ancestor}[2]{\mathit{anc}(#2,#1)}
\newcommand{\CSTsuf}[1]{\mathit{suf}(#1)}
\newcommand{\PT}[3]{\mathit{LCE_{PT}}(#1,#2,#3)}
\begin{document}

\begin{frontmatter}
  \title{Efficiently computing runs on a trie}

  \author[1]{Ryo Sugahara}
  \ead{sugahara.ryo.408@s.kyushu-u.ac.jp}

  \author[1]{Yuto Nakashima}
  \ead{yuto.nakashima@inf.kyushu-u.ac.jp}
\author[1,2]{Shunsuke Inenaga}
  \ead{inenaga@inf.kyushu-u.ac.jp}
\author[3]{Hideo~Bannai\corref{cor1}}
  \ead{hdbn.dsc@tmd.ac.jp}
\author[1]{Masayuki~Takeda}
  \ead{takeda@inf.kyushu-u.ac.jp}
  \cortext[cor1]{Corresponding author}
  \address[1]{Department of Informatics, Kyushu University, Japan}
  \address[2]{PRESTO, Japan Science and Technology Agency}
  \address[3]{M\&D Data Science Center, Tokyo Medical and Dental University, Japan}

  \begin{abstract}
    A maximal repetition, or run, in a string, is a maximal periodic substring whose smallest period is at most half the length of the substring.
    In this paper, we consider runs that correspond to a path
    on a trie,
    or in other words,
    on a rooted edge-labeled tree where the endpoints of the path must be a descendant/ancestor of the other.
    For a trie with $n$ edges, we show that the number of runs is less than $n$.
    We also show an asymptotic lower bound on the maximum density of runs in tries:
    $\lim_{n\rightarrow\infty}\rho_\mathcal{T}(n)/n \geq 0.993238$
    where $\rho_{\mathcal{T}}(n)$ is the maximum number of runs in a trie with $n$ edges.
    Furthermore,
    we also show an $O(n\log \log n)$ time and $O(n)$ space algorithm for finding all runs.
  \end{abstract}

  \begin{keyword}
    maximal repetitions, Lyndon words, trie
\end{keyword}

\end{frontmatter}

\section{Introduction}\label{sec:Introduction}
Repetitions are fundamental characteristics of strings,
and their combinatorial properties as well as their efficient computation
has been a subject of extensive studies.
Maximal periodic substrings, or {\em runs}, is one of the most important
types of repetitions, since they essentially capture all occurrences of
consecutively repeating substrings in a given string.
One of the reasons which makes runs important and interesting is that
the number of runs contained in a given string of length $n$
is $O(n)$~\cite{DBLP:conf/focs/KolpakovK99}, in fact, less than
$n$~\cite{DBLP:journals/siamcomp/BannaiIINTT17},
and can be computed in $O(n)$ time~\cite{DBLP:conf/focs/KolpakovK99,DBLP:journals/corr/abs-2102-08670}.
Note that the total number of occurrences of squares in a string can be $\Theta(n^2)$, or $\Theta(n\log n)$ for only primitive squares.

In this paper, we consider runs that correspond to a path
on a trie,
or in other words,
on a rooted edge-labeled tree where the endpoints of the path must be a descendant/ancestor of the other.
The contributions of this paper are threefold. For a trie with $n$ edges,
we show:
\begin{itemize}
  \item Upper bound: $\rho(n) \leq n$, where $\rho(n)$ is the maximum number of runs in a trie with $n$ edges.
  \item Lower bound: $\lim_{n\rightarrow\infty}\rho_\mathcal{T}(n)/n \geq 0.993238$.
  \item Algorithm: $O(n\log\log n)$ time and $O(n)$ space algorithm for computing all runs in a trie, assuming an integer alphabet.
\end{itemize}
A preliminary version of this paper which described an $O(n(\log\log n)^2)$ time and $O(n)$ space algorithm appeared in~\cite{sugahara_et_al:LIPIcs:2019:10494}. The running time has been improved to $O(n\log\log n)$.
This paper also gives tighter lower bounds than in~\cite{sugahara_et_al:LIPIcs:2019:10494}.

\subsection{Related Work}
A similar problem was considered in~\cite{KOCIUMAKA201460,DBLP:conf/cpm/CrochemoreIKKRRTW12,DBLP:journals/algorithmica/KociumakaRRW17}, but differs in three aspects:
they consider {\em distinct\/} repetitions with {\em integer powers}
on an {\em unrooted} (or undirected) tree.
In this work, we consider {\em occurrences\/} of repetitions
with {\em maximal (possibly fractional) powers}
on a {\em rooted} (or directed) tree.

Funakoshi et al.~\cite{DBLP:conf/stringology/FunakoshiNIBT19} show how to compute all maximal palindromes and all distinct palindromes in tries, in $O(n\log h)$ time, where $h$ is the height of the trie.
Interestingly, the upper bound on the number of maximal palindromes and distinct palindromes in a trie are also known to be $O(n)$, but a linear time algorithm has not yet been discovered.
 \section{Preliminaries}\label{sec:Preliminaries}
\subsection{Strings, Periods, Maximal Repetitions, Lyndon Words}
Let $\Sigma=\{ 1,\ldots, \sigma\}$ denote the alphabet.
We consider an integer alphabet, i.e.,
$\sigma = n^{c}$ for some constant $c$.
$\Sigma^*$ is the set of strings over $\Sigma$.
For any string $w\in\Sigma^*$, let $w[i]$ denote the $i$th symbol of $w$, and $|w|$ the length of $w$. For any $1 \leq i \leq j \leq |w|$, let $w[i..j]=w[i]\cdots w[j]$.
For technical reasons, we assume that $w$ is followed by
a distinct character (i.e. $w[|w|+1]$) in $\Sigma$ that does not occur in $w[1..|w|]$.

A string is {\em primitive}, if it is not a concatenation of 2 or more complete copies of the same string.
A string $w = u^2$, for some string $u$,
is called a {\em square}, and in particular,
if $u$ is primitive, then $w$ is called a {\em primitively rooted square}.
An integer $1 \leq p \leq |w|$ is called a period of $w$, if $w[i]=w[i+p]$ for all $1 \leq i \leq |w| - p$. The smallest period of $w$ will be denoted by $\minperiod(w)$.
For any period $p$ of $w$, there exists a string $x$, called a {\em border} of $w$,
such that $|x| = |w|-p$ and $w = xy = zx$ for some $y,z$.
A string is a {\em repetition}, if its smallest period is at most half the length of the string.
A {\em maximal repetition}, or {\em run}, is a maximal
periodic substring that is a repetition,
i.e.,
a maximal repetition of a string $w$ is an interval
$[i..j]$ of positions where $\minperiod(w[i..j]) \leq (j-i+1)/2$,
and $\minperiod(w[i..j]) \neq \minperiod(w[i'..j'])$
for any $1 \leq i' \leq i$ and $j \leq j'\leq n$ such that $i'\neq i$ or $j'\neq j$.
In other words, a run contains at least two consecutive occurrences of a substring of length $p$,
and the periodicity does not extend to the left or right of the run.
The smallest period of the run will be called the period of the run.
The fraction $(j-i+1)/p \geq 2$ is called the {\em exponent} of the run.

Let $\prec_0$ denote an arbitrary total ordering on $\Sigma$,
as well as the lexicographic ordering on $\Sigma^*$ induced by this ordering.
We also consider the reverse ordering $\prec_1$ on $\Sigma$ (i.e., $\forall a,b\in\Sigma, a\prec_0 b \iff b \prec_1 a$),
and the induced lexicographic ordering on $\Sigma^*$. For $\ell\in \{0,1\}$, let $\bar{\ell} = 1 - \ell$.
A string $w$ is a {\em Lyndon word} w.r.t.\ to a given lexicographic ordering,
if $w$ is lexicographically smaller than any of its proper suffixes.
A well known fact is that a Lyndon word cannot have a non-empty proper border,
since a non-empty proper border of a word is a suffix that is lexicographically smaller than the word itself.

Crochemore et al.\ observed that in any run $[i..j]$ with period $p$,
and any lexicographic ordering, there exists a substring of length $p$ in the run, that is a Lyndon word~\cite{DBLP:journals/jcss/CrochemoreIKRRW12,DBLP:journals/tcs/CrochemoreIKRRW14}. Such Lyndon words are called \Lroot{}s.
Below, we briefly review the main result of~\cite{DBLP:journals/siamcomp/BannaiIINTT17} which essentially
tied longest Lyndon words starting at specific positions within the run,
to \Lroot{}s of runs.
This will be the basis for our new results for tries.

\begin{lemma}[Lemma 3.2 of~\cite{DBLP:journals/siamcomp/BannaiIINTT17}]\label{lem:longestLyndon}
  For any position $1\leq i\leq |w|$ of string $w$, let $\ell\in\{0,1\}$ be such that $w[k]\prec_\ell w[i]$ for $k = \min\{k' \mid w[k']\neq w[i], k' > i\}$.
  Then, the longest Lyndon word that starts at position $i$ is $w[i..i]$ w.r.t $\prec_\ell$,
  and $w[i..j]$ for some $j \geq k$ w.r.t. $\prec_{\bar{\ell}}$.\footnote{Note that $j$ becomes $|w|+1$ when $i = |w|$.}
\end{lemma}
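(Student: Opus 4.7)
The plan is a direct combinatorial case analysis based on the single observation that, by definition of $k$, we have $w[i]=w[i+1]=\cdots=w[k-1]$ and $w[k]\neq w[i]$. I would treat the two claims separately and symmetrically, translating the hypothesis $w[k]\prec_\ell w[i]$ into its contrapositive $w[i]\prec_{\bar\ell} w[k]$ when needed. The boundary case $i=|w|$ is handled automatically by the paper's convention that $w$ is followed by a fresh sentinel character, so $k=|w|+1$ is always well defined and $w[k]$ is comparable to $w[i]$ under both orderings.

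For the $\prec_\ell$ claim, I would show that no prefix $w[i..j]$ with $j>i$ can be a Lyndon word under $\prec_\ell$. If $j<k$, then $w[i..j]=w[i]^{j-i+1}$ is a nontrivial power and hence not primitive, so it cannot be Lyndon. If $j\geq k$, then $w[i..j]$ has $w[k..j]$ as a proper suffix; since $w[k..j]$ starts with $w[k]\prec_\ell w[i]$ while $w[i..j]$ starts with $w[i]$, this suffix is strictly $\prec_\ell$-smaller than the whole string, again violating the Lyndon property. This leaves $w[i..i]$ as the longest Lyndon word starting at $i$ w.r.t.\ $\prec_\ell$.

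For the $\prec_{\bar\ell}$ claim, I would verify that $w[i..k]$ itself is a Lyndon word under $\prec_{\bar\ell}$, which gives the desired $j\geq k$. Every proper suffix of $w[i..k]$ has the form $w[i']\cdots w[k]=w[i]^{k-i'}w[k]$ for some $i<i'\leq k$. Comparing $w[i..k]=w[i]^{k-i}w[k]$ to such a suffix symbol by symbol: the two agree on $w[i]$'s for the first $k-i'$ positions, after which the suffix is forced to produce the terminal $w[k]$ while $w[i..k]$ still has a $w[i]$; since $w[i]\prec_{\bar\ell} w[k]$, we conclude $w[i..k]\prec_{\bar\ell} w[i'..k]$ for every $i'>i$. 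Hence $w[i..k]$ is Lyndon and the longest Lyndon word starting at $i$ extends to at least position~$k$.

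There is no genuine obstacle; the only care required is to keep the two orderings straight and to make the correct choice of $\ell$ from the definition. The entire argument is a few lines per case once the structural observation $w[i]=\cdots=w[k-1]\neq w[k]$ is in hand.
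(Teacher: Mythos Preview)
Your argument is correct. The paper itself does not supply a proof of this lemma; it merely quotes it as Lemma~3.2 of~\cite{DBLP:journals/siamcomp/BannaiIINTT17}, so there is nothing in the present paper to compare against. Your case analysis based on the structural observation $w[i]=w[i+1]=\cdots=w[k-1]\neq w[k]$ is exactly the intended elementary proof: for $\prec_\ell$ any longer prefix is either a nontrivial power or admits the smaller suffix $w[k..j]$, and for $\prec_{\bar\ell}$ the word $w[i..k]=w[i]^{k-i}w[k]$ is Lyndon because every proper suffix first differs from it at a position where the suffix shows $w[k]$ and the whole word shows $w[i]\prec_{\bar\ell} w[k]$. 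The sentinel convention handles $i=|w|$ as you note.
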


\begin{lemma}[Lemma 3.3 of~\cite{DBLP:journals/siamcomp/BannaiIINTT17}]\label{lem:longestLyndonAndRuns}
  Let $r = [i..j]$ be a run in $w$ with period $p$, and let $\ell \in \{0,1\}$ be such that $w[j+1] \prec_\ell w[j+1-p]$. Then, any \Lroot{} $w[i'..j']$ of $r$ with respect to $\prec_\ell$ is the longest Lyndon word with respect to $\prec_\ell$ that is a prefix of $w[i'..|w|]$.
\end{lemma}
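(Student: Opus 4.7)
The plan is to fix an \Lroot{} $u = w[i'..j']$ of $r$ (so $j' = i'+p-1$, and $u$ is Lyndon w.r.t.\ $\prec_\ell$ by assumption) and then show, by contradiction, that every strictly longer prefix $w[i'..j'']$ of $w[i'..|w|]$ (with $j' < j'' \leq |w|$) fails to be Lyndon w.r.t.\ $\prec_\ell$. Since $u$ itself is a Lyndon prefix, this identifies $u$ as the longest Lyndon prefix of $w[i'..|w|]$ w.r.t.\ $\prec_\ell$, which is what the lemma asserts.

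The workhorse is a comparison between $w[i'..j'']$ and its proper non-empty suffix $s := w[i'+p..j'']$. A Lyndon word must be strictly lex-smaller than each of its proper suffixes, so to rule out $w[i'..j'']$ being Lyndon w.r.t.\ $\prec_\ell$ it suffices to establish either (i)~$s$ is a (proper, non-empty) prefix of $w[i'..j'']$, or (ii)~$s \prec_\ell w[i'..j'']$. The period $p$ of the run gives the identity $w[i'+p-1+t] = w[i'-1+t]$ whenever $i'+p-1+t \leq j$, so $s$ and $w[i'..j'']$ coincide on their first $m := \min(j, j'') - i' - p + 1$ positions.

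The proof then splits into two cases according to whether $j''$ leaves the run. If $j'' \leq j$, then $|s| = j''-i'-p+1 \leq m$, so the coincidence above covers the whole of $s$; in particular $s$ is a (non-empty proper) prefix as well as a suffix of $w[i'..j'']$, giving case~(i). If $j'' > j$, then $m = j-i'-p+1$ and position $m+1$ is valid in both strings; a direct unfolding yields $s[m+1] = w[j+1]$ and $(w[i'..j''])[m+1] = w[i'+m] = w[j+1-p]$. The hypothesis $w[j+1] \prec_\ell w[j+1-p]$ on the choice of $\ell$ then delivers case~(ii) at once.

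The one place where care is needed is checking that the index $m+1$ actually lies within both $s$ and $w[i'..j'']$ in the second case, which reduces to the inequalities $j'' \geq j+1$ and $j'' \geq j-p+1$, both trivial from $j'' > j$. Beyond this bookkeeping, the argument is essentially a clean exploitation of three ingredients already supplied by the setup: the period $p$ inside the run, the right-maximality of the run (implicit in $w[j+1] \neq w[j+1-p]$), and the specific side $\ell$ on which the break $w[j+1] \prec_\ell w[j+1-p]$ occurs. I do not anticipate a real obstacle; the entire content of the lemma is pinned down by this one-character comparison at position $m+1$.
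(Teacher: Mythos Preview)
The paper does not supply its own proof of this lemma; it is quoted from the cited reference without argument. Your proof is correct and is essentially the standard one: for each $j''>j'$ you compare $w[i'..j'']$ with its suffix $w[i'+p..j'']$, use the period $p$ inside the run to match them up to the run boundary, and then either exhibit a border (when $j''\le j$) or invoke the hypothesis $w[j+1]\prec_\ell w[j+1-p]$ at the first mismatch (when $j''>j$) to produce a strictly smaller proper suffix.
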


Since an \Lroot{} cannot be shared by two different runs,
it follows from Lemma~\ref{lem:longestLyndonAndRuns}
that the number of runs is at most $2n$,
since each position can be the starting point of at most two \Lroot{}s that correspond to distinct runs.
In~\cite{DBLP:journals/siamcomp/BannaiIINTT17}, a stronger bound of $n$ was shown
from the observation that each run contains at least one \Lroot{} that does not begin at the first position of the run, and that the two longest Lyndon words starting at a given position
for the two lexicographic orders cannot simultaneously be such \Lroot{}s of runs.
This is because if $w[i'..i']$ and $w[i'..j']$ were \Lroot{}s and the runs start before position $i'$,
then, from the periods of the two runs,
it must be that $w[i'-1]=w[i']=w[j']$ contradicting that $w[i'..j']$ is a Lyndon word and cannot have a non-empty border.
In Section~\ref{sec:numberOfRunsInTrie}, we will see that the last argument does not completely carry over to the case of tries,
but show that we can still improve the bound again to $n$.

The above lemmas also lead to a new linear time algorithm for computing all runs, that consists of the following steps:
\begin{enumerate}
  \item compute the longest Lyndon word that starts at each position for both lexicographic orders $\prec_0$ and $\prec_1$,
  \item check whether there is a run for which the longest Lyndon word corresponds to an \Lroot{}.
\end{enumerate}

There are several ways to compute the first step in amortized constant time for each position,
but it can essentially be reduced to computing the next smaller values (NSV) in the inverse suffix array of the string.
We describe the algorithm in more detail in Section~\ref{sec:computingLongestLyndon},
and will see that the amortization of the standard algorithm does not carry over to the trie case.
We give a new linear time algorithm using the static tree set-union data structure (more specifically, decremental nearest marked ancestor queries)~\cite{GABOW1985209},
which {\em does} carry over to the trie case.

The second step can be computed in constant time per candidate \Lroot{}
with linear time-preprocessing,
by using longest common extension queries~(e.g.~\cite{DBLP:conf/cpm/FischerH06})
in the forward and reverse directions of the string.
Unfortunately again, this does not directly carry over to the trie case
because, as far as we know, longest common extension queries on trees can be computed in constant time only
in the direction toward the root of the trie,
when space is restricted to linear in the size of the trie.
We show that the longest common extension query in the opposite direction can be reduced to what are called path-tree LCE queries~\cite{DBLP:journals/tcs/BilleGGLW16}.
A naive application of this data structure results in an $O(n(\log\log n)^2)$ time and $O(n)$ space algorithm.
In Section~\ref{sec:computingRuns},
we will show how to adapt the method of~\cite{DBLP:journals/tcs/BilleGGLW16}
so that a batch of $O(n)$ path-tree LCE queries can be computed in $O(n\log\log n)$ total time.

\subsection{Common Suffix Trie}
A {\em trie} is a rooted tree with labeled edges, such that each
edge to the children of a node is labeled with distinct symbols.
A trie can be considered as representing a set of strings obtained by concatenating the labels on a root to leaf path.
Note that for a trie with $n$ edges, the total length of such strings can be quadratic in $n$.
An example can be given
by the set of strings $X=\{xc_1, xc_2, \cdots x c_n \}$ where
$x\in\Sigma^{n-1}$ is an arbitrary string and $c_1,\dots,c_n\in\Sigma$ are pairwise distinct characters. Here, the size of the trie is $\Theta(n)$, while the total length of strings is $\Theta(n^2)$.
Also notice that the total number of distinct suffixes of strings in $X$ is also $\Theta(n^2)$.
However if we consider the strings in the reverse direction,
i.e., consider edges of the trie to be directed toward the root,
the number of distinct suffixes is linear in the size of the tree.
Such tries are called {\em common suffix tries}~\cite{DBLP:journals/tcs/Breslauer98}.
We will use the terms parent/child/ancestor/descendant with the
standard meaning based on the undirected trie, e.g., the root is an ancestor of all nodes in the trie.
For any node $u$ of the trie, $\Parent(u)$ will denote its parent node.
For any integer $l \geq 0$, $\Ancestor{u}{l}$ denotes the $l$-th ancestor of $u$,
i.e. $\Parent^l(u) = \Ancestor{u}{l}$.
Note that a trie can be pre-processed in linear time so that for any node $v$ and integer $l$,
$\Ancestor{v}{l}$ can be computed in constant time (e.g.~\cite{DBLP:journals/tcs/BenderF04}).

For any nodes $u,v$ of the trie where $u$ is an ancestor of $v$, or vice versa,
let $\triepath{u}{v}$ denote the path from $u$ to $v$,
and let $\pathstr(u,v)$ denote the string obtained by concatenating the labels on $\triepath{u}{v}$.
For technical reasons, we assume that the root node has an auxiliary parent node $\bot$,
where the edge is labeled by a distinct character in $\Sigma$ that is not used elsewhere in the trie.
We denote by $\CSTsuf{v}$ the string obtained by concatenating the labels on the path $\triepath{v}{\bot}$, i.e.,
$\CSTsuf{v} = \pathstr(v,\bot)$. Such strings will be called a {\em suffix} of the trie.

 \section{Runs in a Trie}
\subsection{The Number of Runs in a Trie}\label{sec:numberOfRunsInTrie}
We first define runs on a trie.
Let $v$ be an ancestor of $u$.
A path $\triepath{u}{v}$ is a run on a trie $T$
if $\pathstr(u,v)$ is a repetition and is maximal.
More precisely,
$\minperiod(\pathstr(u,v)) \leq |\pathstr(u,v)|/2$, and
for any descendant $u'$ of $u$ and ancestor $v'$ of $v$,
$\minperiod(\pathstr(u,v)) \neq \minperiod(\pathstr(u',v'))$ if $u'\neq u$ or $v'\neq v$.
Let $\rho_{\mathcal{T}}(n)$ denote the maximum number of runs in a trie with $n$ edges.

\begin{figure}
  \begin{center}
    \includegraphics[width=\textwidth]{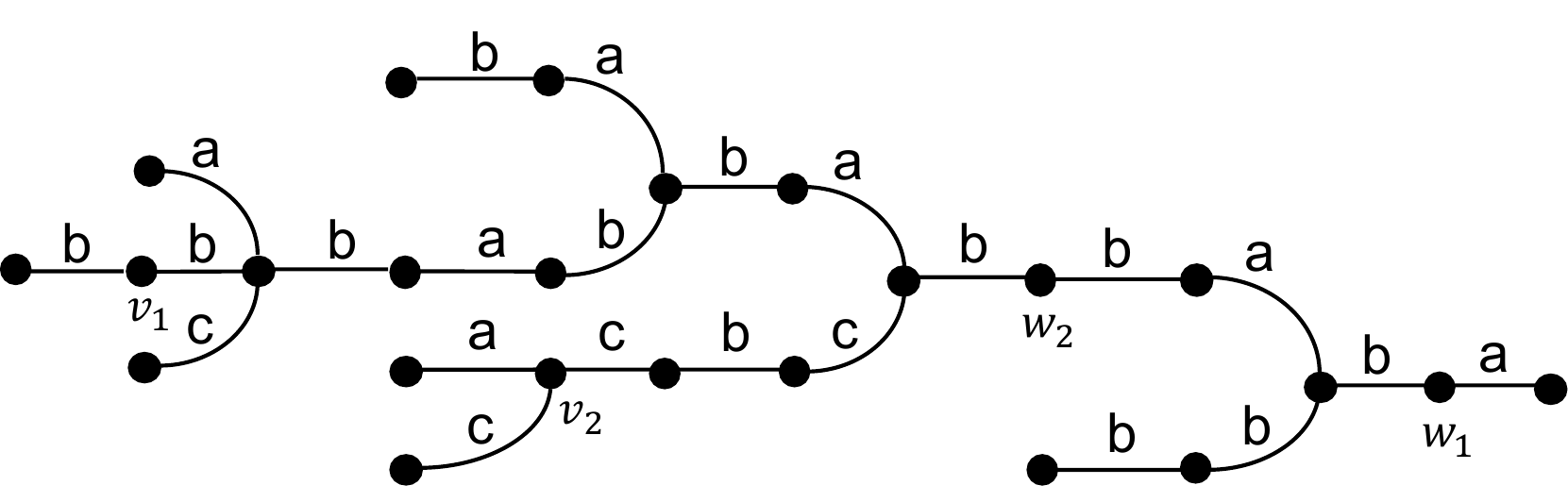}
  \end{center}
  \caption{Example of runs in a trie.
    $(v_1,w_1)$ is a run with period 3, and
    $(v_2,w_2)$ is a run with period 2.}
  \label{fig:runexample}
\end{figure}

Noticing that the parent of any node in the trie is unique,
it is easy to see that analogies of Lemmas~\ref{lem:longestLyndon} and~\ref{lem:longestLyndonAndRuns} hold
for tries.
Thus, we have the following.
\begin{corollary}
  For any node $v$ except the root or $\bot$,
  let
  $\ell\in\{0,1\}$ be such that
  $\CSTsuf{v}[k]\prec_\ell \CSTsuf{v}[1]$
  for
  $k = \min\{k'\mid \CSTsuf{v}[k']\neq \CSTsuf{v}[1], k' > 1\}$.
  Then, the longest Lyndon word that is a prefix of $w_v$ is $\CSTsuf{v}[1..1]$ w.r.t. $\prec_\ell$ and $\CSTsuf{v}[1..j]$ w.r.t. $\prec_{\bar{\ell}}$ for some $j \geq k$.
\end{corollary}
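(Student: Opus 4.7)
The plan is to observe that this corollary is essentially Lemma~\ref{lem:longestLyndon} applied in the special case $i = 1$ to the string $w_v = \CSTsuf{v}$. The only thing one needs to verify is that the trie setting does not invalidate this specialization.

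First I would note the crucial structural fact already highlighted in the paragraph preceding the corollary: in a trie, every non-root node has a unique parent. Consequently, for each node $v$, the walk from $v$ to the auxiliary node $\bot$ is uniquely determined, so $\CSTsuf{v}$ is a well-defined single string. This means that ``the longest Lyndon word that is a prefix of $\CSTsuf{v}$'' is an unambiguous notion; it coincides with ``the longest Lyndon word of $w_v$ starting at position~$1$'' in the ordinary string sense.

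Next I would simply instantiate Lemma~\ref{lem:longestLyndon} with $w := w_v$ and $i := 1$. The condition on $k$ in the corollary, namely $k = \min\{k' \mid w_v[k'] \neq w_v[1],\ k' > 1\}$, together with $w_v[k] \prec_\ell w_v[1]$, is exactly the hypothesis of Lemma~\ref{lem:longestLyndon} with $i = 1$. Its conclusion yields that the longest Lyndon word starting at position~$1$ of $w_v$ equals $w_v[1..1]$ w.r.t.\ $\prec_\ell$ and equals $w_v[1..j]$ for some $j \geq k$ w.r.t.\ $\prec_{\bar\ell}$. By the remark in the previous paragraph this is exactly the stated claim about longest Lyndon prefixes of $\CSTsuf{v}$.

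There is essentially no obstacle: because we read the labels upward from $v$ toward $\bot$ along a unique path, the trie geometry plays no combinatorial role beyond guaranteeing that $w_v$ is a single well-defined string. The auxiliary character on the edge from the root to $\bot$, together with the convention that $w$ is padded by a distinct terminal symbol in the original lemma, ensures that $k$ is always well-defined (even for very short paths), matching the footnote convention of Lemma~\ref{lem:longestLyndon}. Thus the corollary is immediate from the string version.
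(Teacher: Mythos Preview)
Your proposal is correct and matches the paper's own justification: the paper simply remarks that since every node's parent is unique, $\CSTsuf{v}$ is a well-defined string and the analogue of Lemma~\ref{lem:longestLyndon} holds, which is precisely your instantiation of that lemma with $w = w_v$ and $i = 1$.
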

\begin{corollary}\label{cor:longestLyndonAndRuns}
  Let $r=\triepath{u}{v}$ be a run with period $p$ in the trie and
  $\ell\in\{0,1\}$ be such that
  $\CSTsuf{u}[x+1]\prec_\ell \CSTsuf{u}[x-p+1]$, where $x= |\pathstr(u,v)|$.
  Then, any \Lroot{} $\pathstr(u',v')$ of the run with respect to $\prec_\ell$
  is the longest Lyndon word that is a prefix of $\CSTsuf{u'}$.
\end{corollary}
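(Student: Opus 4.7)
The plan is to reduce the trie statement directly to its string analogue Lemma~\ref{lem:longestLyndonAndRuns} by working inside the single string $w_v = \CSTsuf{v_i}$, which is the unique path from the deeper endpoint $v_i$ up through $v_j$ and on to $\bot$. Because the path going upward from any node is unique in a common suffix trie, fixing $v_i$ reduces the combinatorics on the trie to combinatorics on one string.

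First I would observe that the path $\pathstr(v_i,v_j)$ is precisely the prefix $w_v[1..x]$ of $w_v$, and that for any intermediate node $v_{i'}$ on the path from $v_i$ to $v_j$ we have $\CSTsuf{v_{i'}} = w_v[d+1..|w_v|]$, where $d = |\pathstr(v_i,v_{i'})|$. In particular, any \Lroot{} $\pathstr(v_{i'},v_{j'})$ of the run, being a substring of $\pathstr(v_i,v_j)$, appears as a substring $w_v[i'..j']$ of $w_v$ (abusing $i',j'$ to denote both the nodes and the corresponding positions in $w_v$), and remains a Lyndon word of the same length.

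Next I would verify that $w_v[1..x]$ is itself a run with period $p$ in the ordinary string $w_v$. It is a repetition by hypothesis. Non-extendability to the right in $w_v$ is inherited from the trie run condition applied to ancestors $v_{j'}$ of $v_j$, since any rightward extension of $w_v[1..x]$ is spelled by the unique upward path past $v_j$. Non-extendability to the left is automatic because position $1$ is the boundary of $w_v$. The hypothesis $w_v[x+1] \prec_\ell w_v[x-p+1]$ is then literally the hypothesis of Lemma~\ref{lem:longestLyndonAndRuns} for this string run.

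Invoking Lemma~\ref{lem:longestLyndonAndRuns} on $w_v$ gives that $w_v[i'..j']$ is the longest Lyndon word w.r.t.\ $\prec_\ell$ that is a prefix of $w_v[i'..|w_v|]$. Translating back via the identifications above, $w_v[i'..j'] = \pathstr(v_{i'},v_{j'})$ and $w_v[i'..|w_v|] = \CSTsuf{v_{i'}}$, which is exactly the claim. There is no real obstacle here beyond carefully maintaining the correspondence between nodes on the path $v_i \to v_j$ and positions in $w_v$; the point of the proof is simply that the common suffix trie structure lets us pretend we are working inside a single string once $v_i$ is fixed.
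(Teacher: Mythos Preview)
Your proposal is correct and matches the paper's own justification: the paper does not give an explicit proof of this corollary, but simply observes that since the parent of any node in the trie is unique, the string-case Lemma~\ref{lem:longestLyndonAndRuns} carries over directly. Your write-up spells out this reduction in detail---fixing $v_i$, identifying the run with the prefix $w_v[1..x]$ of $\CSTsuf{v_i}$, checking it is a string run, and invoking Lemma~\ref{lem:longestLyndonAndRuns}---which is exactly the intended argument.
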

Since we assumed that the edge labels of the children of a given node in a trie are distinct, a given \Lroot{} can only correspond to one distinct run; i.e.,
the extension of the period in both directions from an \Lroot{} is uniquely determined.
Therefore, the argument for standard strings carries over to the trie case,
and it follows that the number of runs must be less than $2n$.
We further observe the following
\begin{theorem}\label{thm:numberOfRunsInTrie}
  $\rho_{\mathcal{T}}(n) < n$.
\end{theorem}
\begin{proof}
  Suppose $\pathstr(u',\Parent(u'))$ and $\pathstr(u',v')$ are simultaneously \Lroot{}s
  of runs respectively w.r.t. $\prec_\ell$ and $\prec_{\bar{\ell}}$, and that they do not start at the beginning of the runs.
  Let $p = |\pathstr(u',v')|$. If $u'$ has only one child, this leads to a contradiction using the same argument as the case for strings;
  i.e., if $u$ is the child of $u'$, then, from the periods of the two runs,
  $\pathstr(u,u') = \CSTsuf{u'}[1] = \CSTsuf{u'}[p]$ contradicting that
  $\CSTsuf{u'}[1..p] = \pathstr(u',v')$ is a Lyndon word and cannot have a non-empty border.
  Thus, $u'$ must have at least two children,
  $w,w'$, where
  $\pathstr(w,u') = \CSTsuf{u'}[1]$ and
  $\pathstr(w',u') = \CSTsuf{u'}[p]$.
  Let $k$ be the number of branching nodes in the trie. Then, the number of leaves is at least $k+1$.
  Since a run cannot start before a leaf node, this means that a longest Lyndon word starting at a leaf cannot be an \Lroot{} that does not start at the beginning of the run.
  Therefore, although there can be at most $k$ nodes such that both longest Lyndon words are such \Lroot{}s,
  there exist at least $k+1$ nodes where both are not. Thus, the theorem holds.
\end{proof}
In a similar way to Theorem~3.6 of~\cite{DBLP:journals/siamcomp/BannaiIINTT17}, we can bound the sum of exponents of all runs in a trie.
\begin{corollary}\label{cor:sumOfExponents}
  The sum of exponents of all runs in a trie with $n$ edges is less than $3n$.
\end{corollary}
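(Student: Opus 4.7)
The plan is to charge each run's exponent to the number of \Lroot{} occurrences it contains, and then to bound the total number of such occurrences across all runs by $2n$, mirroring the standard argument used for strings.

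Concretely, I would fix a run $r = (v_i, v_j)$ of period $p_r$ and exponent $e_r = |\pathstr(v_i,v_j)|/p_r \geq 2$, and invoke Corollary~\ref{cor:longestLyndonAndRuns} to obtain an ordering $\ell_r \in \{0,1\}$ for which every \Lroot{} of $r$ with respect to $\prec_{\ell_r}$ is the longest Lyndon prefix of $\CSTsuf{v_{i'}}$ at its starting node $v_{i'}$. Periodicity then forces the \Lroot{} string to occur starting at $\lfloor e_r \rfloor$ distinct nodes along the path from $v_i$ to $v_j$, and each such occurrence is an \Lroot{} of $r$.

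Next, I would treat (node, ordering) pairs as a resource: since each node has a unique longest Lyndon prefix per ordering, there are at most $2n$ such pairs in total. Distinct runs cannot claim the same pair, because, as already observed in the proof of the preceding theorem, the extension of the period from any \Lroot{} is uniquely determined in a trie --- child labels are distinct, so a given Lyndon word placed at a given node can be an \Lroot{} of at most one run. It follows that $\sum_r \lfloor e_r \rfloor \leq 2n$, and since $\lfloor e_r \rfloor > e_r - 1$, combining with the bound of less than $n$ on the number of runs $R$ yields
\[
  \sum_r e_r \;<\; \sum_r (\lfloor e_r \rfloor + 1) \;\leq\; 2n + R \;<\; 3n.
\]

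The potentially delicate step is ensuring that the multiplicity $\lfloor e_r \rfloor$ enjoyed in the string case does not shrink on a trie; this reduces to checking that the $\lfloor e_r \rfloor$ periodic shifts of the \Lroot{} along the run start at distinct nodes. That is immediate, however, since they are ancestors of $v_i$ at distinct depths along the unique path toward $v_j$, so the argument goes through without further adjustment.
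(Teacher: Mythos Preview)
Your argument is essentially the paper's: charge the \Lroot{} occurrences in each run to (node, ordering) pairs, of which there are at most $2n$, and use that distinct runs cannot share such a pair because the extension from an \Lroot{} is uniquely determined. The one slip is the multiplicity: a run of exponent $e_r$ is only guaranteed $\lfloor e_r\rfloor-1$ (equivalently $\lfloor e_r-1\rfloor$) occurrences of its \Lroot{}, not $\lfloor e_r\rfloor$ --- for instance, in the run $baba$ with period~$2$ the \Lroot{} $ab$ occurs just once although $e_r=2$. This only costs a constant: from $\sum_r(\lfloor e_r\rfloor-1)\le 2n$ you still obtain $\sum_r e_r<4n$. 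The paper states the count as $\lfloor e_r-1\rfloor$ (restricting to \Lroot{}s not starting at the beginning of the run) and otherwise argues the same way.
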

\begin{proof}
  A given run $r$ with exponent $e_r$ contains at least $\lfloor e_r - 1 \rfloor\geq 1$ occurrences of its \Lroot{}s that
  do not start at the beginning of the run,
each corresponding to a longest Lyndon word starting at that position.
  From the proof of Theorem~\ref{thm:numberOfRunsInTrie},
  the total number of these \Lroot{}s is less than $n$.
  Let $\mathit{Runs}$ denote the set of all runs in the trie.
  Then, $\sum_{r\in\mathit{Runs}} (e_r - 2) \leq \sum_{r\in\mathit{Runs}} \lfloor e_r - 1\rfloor < n$, and the Corollary follows.
\end{proof}

Next, we consider lower bounds.
Since a string can be considered as a special case of a trie with no branching nodes, lower bounds on strings directly carries over to tries.
The best known lower bound for strings is $0.94457564n$~\cite{DBLP:journals/ajc/Simpson10}.
However, we can improve this bound for tries with the following idea:
given a run-rich (binary) string, consider an occurrence of substring $\texttt{bab}$.
If we add a branching edge labeled with $\texttt{a}$ as a child of $\texttt{a}$,
we can create a new run $\texttt{aa}$.
Since this creates at least one new run with a single new edge, we increase the overall ratio of runs.
This can be done for occurrences of $\texttt{aba}$ as well (by adding $\texttt{b}$),
and as many times as possible.

Using the run-rich string \texttt{t184973} available at~\cite{url:lowerbounds},
we can construct a trie $T$ that contains $282327$ runs with $284249$ edges
(more precisely, there were $90922$ positions where adding an edge resulted in 1 new run, and $8354$ positions that resulted in 2 new runs).
By simply concatenating the string \texttt{t184973}
$k$ times (with a delimiter in between) and constructing the trie $T$ for each copy (and connecting them),
we can obtain a trie with $282327k$ runs and $284250k$ edges.
For any length $n$ that is not a multiple of $284250$,
let $k = \lfloor n/284250\rfloor$.
Then, since it is possible to add arbitrary edges to a given trie without reducing the number of runs, there exists a trie
with at least $2282327k$ runs with $n < 284250(k+1)$ edges.
Thus,
\[
  \frac{\rho_{\mathcal{T}}(n)}{n} \geq \frac{2282327k}{n}
  \geq \frac{2282327k}{284250(k+1)}.
\]
Since $k\rightarrow\infty$ as $n\rightarrow\infty$, we have
$\lim_{n\rightarrow\infty}\rho_{\mathcal{T}}(n)/n \geq 2282327/284250  \approx 0.993238$.
\begin{theorem}
  $\lim_{n\rightarrow\infty}\rho_{\mathcal{T}}(n)/n \geq 2282327/284250  \approx 0.993238$.
\end{theorem}

Note that our aim here was to demonstrate that tries with $n$ edges can contain more runs than strings of length $n$. A more careful application of the above idea will most certainly lead to better lower bounds.
 \subsection{Computing Longest Lyndon Words}\label{sec:computingLongestLyndon}

Next, we consider the problem of computing, for any node $v$ of the trie,
the longest Lyndon word that is a prefix of $\CSTsuf{v}$.
We first describe the algorithm for strings, which is based on the following lemma.

\begin{lemma}\label{lem:lexorderAndLongestLyndon}
  For any string $w$ and position $1\leq i < |w|$,
  the longest Lyndon word starting at $i$ w.r.t.
  $\prec$ is $w[i..j-1]$, where $j$ is such that
  $j = \min \{ k>i \mid w[k..|w|]\prec w[i..|w|] \}$.
\end{lemma}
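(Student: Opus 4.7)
The plan is to split the proof into two parts: first verifying that $w[i..j-1]$ is a Lyndon word with respect to $\prec$, and then verifying that no longer prefix $w[i..j']$ with $j' \geq j$ is one. In both parts the only lexicographic information I have comes from the defining property of $j$, which is phrased in terms of full suffixes of $w$. The unifying idea is therefore to reduce every comparison of truncated substrings ending at some position $\leq j'$ to a comparison of full suffixes, so that minimality of $j$ can be applied directly. The main obstacle in this reduction is the situation in which one truncated substring is a prefix of the other, because then the truncated comparison and the full-suffix comparison can genuinely disagree.

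For Lyndonness, I would fix $i < k \leq j - 1$ and try to deduce $w[i..j-1] \prec w[k..j-1]$ from $w[i..|w|] \prec w[k..|w|]$ (which holds by minimality of $j$). The two comparisons agree unless the longest common prefix of the two suffixes is long enough that $w[k..j-1]$ is a prefix of $w[i..j-1]$, i.e.\ reaches length $m := j - k$. In that one dangerous case, the equality $w[i..i+m-1] = w[k..j-1]$ allows me to strip the common $m$-length prefix from the suffix comparison and rewrite $w[i..|w|] \prec w[k..|w|]$ as $w[i+m..|w|] \prec w[j..|w|]$. Setting $i' = i + m$, one has $i < i' < j$, so minimality of $j$ forces $w[i..|w|] \prec w[i'..|w|] \prec w[j..|w|]$, contradicting $w[j..|w|] \prec w[i..|w|]$. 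This rules out the dangerous case, and the Lyndon property follows.

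For maximality, suppose $w[i..j']$ with $j' \geq j$ were Lyndon; then $w[i..j'] \prec w[j..j']$, since $w[j..j']$ is a proper suffix. I would again split on whether the longest common prefix of $w[i..|w|]$ and $w[j..|w|]$ is at least $j' - j + 1$. If it is, then $w[j..j']$ is a proper prefix of $w[i..j']$ and is therefore strictly smaller, contradicting Lyndonness. If it is not, the mismatch that decides the truncated comparison is the same as the one that decides the full-suffix comparison, so $w[i..j'] \prec w[j..j']$ forces $w[i..|w|] \prec w[j..|w|]$, again contradicting the defining property of $j$. Once the prefix/common-prefix case analysis is handled symmetrically in both directions, the result is immediate; that case analysis is the only genuinely delicate point in the whole argument.
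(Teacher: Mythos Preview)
Your proof is correct and follows essentially the same route as the paper's. Both arguments show Lyndonness of $w[i..j-1]$ by ruling out the case where the longest common prefix of $w[i..|w|]$ and $w[k..|w|]$ reaches length $j-k$, via the chain $w[i..|w|]\prec w[i+(j-k)..|w|]\prec w[j..|w|]$ (the first inequality by minimality of $j$, the second by stripping the common prefix), which contradicts $w[j..|w|]\prec w[i..|w|]$; and both handle maximality by observing that $w[i..j']\prec w[j..j']$ together with $|w[i..j']|>|w[j..j']|$ forces either a border in $w[i..j']$ or $w[i..|w|]\prec w[j..|w|]$. Your write-up is simply more explicit about the prefix/mismatch case split than the paper's compressed version.
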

\begin{proof}
  Let $j = \min \{ k > i \mid w[k..|w|]\prec w[i..|w|]\}$.
  By definition, we have $w[k..|w|]\succ w[i..|w|]$ for any $i < k < j$.
  For any such $k$, $w[k..|w|] = w[k..j-1]w[j..|w|] \succ w[i..i+(j-1-k)]w[i+(j-k)..j-1]w[j..|w|] = w[i..|w|]$.
  If the longest common prefix of $w[k..|w|]$ and $w[i..|w|]$ is longer than or equal to $w[k..j-1]$,
  this implies $w[j..|w|]\succ w[i+(j-k)..j-1]w[j..|w|]\succ w[i..|w|]$, a contradiction.
  Therefore, the longest common prefix of $w[k..|w|]$ and $w[i..|w|]$ must be shorter than $w[k..j-1]$,
  implying that $w[i..j-1]\prec w[k..j-1]$. Thus, $w[i..j-1]$ is a Lyndon word.
  Suppose $w[i..k]$ is a Lyndon word for some $k \geq j$.
  Then, $w[i..k] \prec w[j..k]$. Since $|w[i..k]| > |w[j..k]|$,
  $w[i..k]$ cannot be a prefix of $w[j..k]$
  which implies $w[i..|w|] \prec w[j..|w|]$, contradicting the definition of $j$.
  Thus, $w[i..j-1]$ is the longest Lyndon word starting at $i$.
\end{proof}

From Lemma~\ref{lem:lexorderAndLongestLyndon}, the longest Lyndon word starting at each
position of a string $w$ can be computed in linear time, given the inverse suffix array of $w$.
The inverse suffix array $\ISA[1..|w|]$ of $w$ is an array of integers
such that $\ISA[i]=j$ when $w[i..|w|]$ is the lexicographically $j$th smallest suffix of $w$.
That is, the $j$ in Lemma~\ref{lem:lexorderAndLongestLyndon} can be restated as
$j = \min \{ k > i \mid \ISA[k] < \ISA[i]\}$.
This can be restated as the problem of finding the next smaller value (NSV) for each position
of the $\ISA$,
for which there exists a simple linear time algorithm~(e.g.~\cite{enwiki:942710995})
as show in Algorithm~\ref{algo:nsvalgo}.
\begin{algorithm}
  \caption{Computing NSV on array $A$ of integers}\label{algo:nsvalgo}
  \tcp{assumes $A[n+1]$ is smaller than all values in $A$.}
  $\NSV[n] = n+1$\;
  \For{$i = n-1$ to $1$}{
    $x = i+1$\label{algo:nextpos}\;
    \While{$A[i] \leq A[x]$}{
      $x = \NSV[x]$\;
    }
    $\NSV[i] = x$\;
  }
\end{algorithm}
The linear running time can be shown with a simple amortized analysis;
in the while loop, $\NSV[x]$ is only accessed once for any position $x$
since $\NSV[i]$ is set to a larger value and thus will subsequently be skipped.

Since, as before, the parent of a node is unique, Lemma~\ref{lem:lexorderAndLongestLyndon} carries over to the trie case.
We can assign the lexicographic rank $\ISA[u]$ of $\CSTsuf{u}$ to each node $u$ in linear time
from the suffix tree of the trie, which is
a compacted trie containing all and only suffixes of the trie.
\begin{theorem}[suffix tree of a trie~\cite{DBLP:journals/tcs/Breslauer98,DBLP:conf/isaac/Shibuya99}]\label{theorem:suffixTreeOfTrie}
  The suffix tree of a trie on a constant or integer alphabet can be represented and constructed in $O(n)$ time.
\end{theorem}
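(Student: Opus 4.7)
The plan is to adapt a linear-time suffix tree construction from strings to the common suffix trie. A useful first step is the structural observation that in a common suffix trie with $n$ edges, the strings $\{\CSTsuf{v}\}$ are pairwise distinct (two nodes with identical label sequences on their paths to $\bot$ would coincide level by level by the trie property), so there are $\Theta(n)$ suffixes to index and the target suffix tree has $O(n)$ leaves, $O(n)$ internal nodes, and $O(n)$ compactly represented edge-label length. This makes a linear-time bound plausible at all.

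For the construction itself I would generalize McCreight's algorithm. Traverse the trie in DFS order while maintaining the suffix tree of $\{\CSTsuf{u} : u \text{ is a proper ancestor of the current node, or equals it}\}$. When descending from $v$ to a child $u$ along character $c$, the new suffix $\CSTsuf{u} = c\cdot\CSTsuf{v}$ is inserted by following a reverse-suffix link out of the node that represents $\CSTsuf{v}$ and then performing a McCreight-style rescan/scan to locate the insertion point and add a new leaf (possibly splitting one edge). When DFS backtracks from $u$ to $v$, the insertion and any edge split it caused are undone so that $v$'s next child starts from the correct state. Child-by-label lookups inside the suffix tree can be handled by radix-sorting all relevant $(\text{node}, \text{label})$ pairs once, in linear time since $\sigma = n^{O(1)}$, and the constant-time level-ancestor structure on the trie already cited in Section~\ref{sec:Preliminaries} drives the rescan jumps in $O(1)$ each.

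The main obstacle is the amortized bound. McCreight's classical potential argument exploits the fact that the active point's string depth is monotone non-decreasing as new suffixes are appended; DFS on a trie breaks this, since every backtrack pulls the active point back up. The fix, which is essentially what Breslauer and Shibuya do, is to store explicit reverse-suffix links on the growing suffix tree and charge each insertion together with its later undo to the trie edge that triggered them, so every trie edge pays only $O(1)$ over the entire DFS. If that direct extension proves delicate, a fallback is a Farach-Colton-style recursion on the trie: contract pairs of consecutive edges into a trie of roughly half the size, recursively build its suffix tree, and then refine and merge using LCA queries on both the original trie and the recursive suffix tree in linear time per level, giving a geometric sum of $O(n)$.
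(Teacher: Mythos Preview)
The paper does not prove this theorem; it is quoted as a known result of Breslauer and Shibuya and used as a black box, so there is no in-paper proof to compare your sketch against.

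Your sketch nonetheless has a real gap. You propose to maintain, during a DFS on the trie, the suffix tree of only those $\CSTsuf{u}$ with $u$ on the current root-to-node path, and to \emph{undo} each leaf insertion and induced edge split on backtrack. But the target object is the suffix tree containing $\CSTsuf{v}$ for \emph{every} node $v$ simultaneously. Under your scheme, once the DFS leaves a subtree all of that subtree's suffixes are removed, and at termination only the root's suffix survives; nothing in your plan explains how the per-path suffix trees are ever combined into the global one. If you instead intend to keep the leaves but still undo edge splits ``so that $v$'s next child starts from the correct state,'' the structure becomes inconsistent: a split created while processing one child of $v$ may be exactly the internal node needed to correctly place a suffix coming from a sibling subtree. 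The amortization claim (``every trie edge pays $O(1)$'') is also unjustified, since a single McCreight-style insertion is not $O(1)$ in the worst case, and the standard potential (head depth) does not survive an undo that resets it.

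Breslauer's actual construction keeps all insertions---there is no undo---and relies on a separate lemma bounding the total rescan work across all branches of the trie; this is the step that genuinely replaces the monotone-head argument you correctly identify as broken. Your Farach-style fallback is closer in spirit to Shibuya's integer-alphabet algorithm, but as written it elides precisely the hard parts: how to define the half-size recursive instance on a \emph{branching} trie (odd/even positions do not partition cleanly), and how to merge the two recursive suffix trees in linear time.
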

The problem now is to compute, for each node $u$,
the closest ancestor $v$ of $u$ such that
the lexicographic rank of $\CSTsuf{v}$ is smaller than that of $\CSTsuf{u}$.
Algorithm~\ref{algo:nsvalgo} can be modified to correctly compute the $\NSV$ values on the trie;
the for loop is modified to enumerate nodes in some order such that the parent of a considered node is already processed,
and line~\ref{algo:nextpos} can be changed to $x = \Parent(x)$.
However, the amortization will not work; the existence of branching paths means
there can be more than one child of a given node,
and the same position (node) $x$ could be accessed in the while loop for multiple paths, leading to a super-linear running time.

To overcome this problem, we introduce a new, (conceptually) simple linear time
algorithm based on nearest marked ancestor queries.
\begin{theorem}[decremental nearest marked ancestor~\cite{GABOW1985209}]
  A given tree can be processed in linear time such that all nodes are initially marked,
  and the following operations can be done in amortized constant time:
  \begin{itemize}
    \item $\mathit{nma}(u)$: return the nearest ancestor node of $u$ that is marked.
    \item $\mathit{unmark}(u)$: un-mark the node $u$.
  \end{itemize}
\end{theorem}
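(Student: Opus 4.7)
The plan is to use the classical micro-macro tree decomposition of Gabow and Tarjan. First I would partition the input tree $T$ into connected subtrees (\emph{micro-trees}) of size at most $b = \Theta(\log n)$ by a standard greedy bottom-up merge, producing $O(n/\log n)$ micro-trees, each with a distinguished \emph{top} node nearest the root of $T$. The top nodes together with the inter-micro-tree edges of $T$ form a \emph{macro tree} of size $O(n/\log n)$.

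Each micro-tree's current marked set is encoded as a $b$-bit bitmask. Because the shape, bitmask, and internal query node of a micro-tree all fit in $O(\log n)$ bits, the number of distinct (shape, bitmask, query) triples is $n^{O(1)}$, so by choosing $b$ small (for instance $b = \tfrac{1}{4}\log n$) a universal lookup table of size and construction time $O(n)$ can be precomputed that returns, in $O(1)$, either the nearest marked ancestor inside the micro-tree or a token indicating that the answer lies above it. A second table keyed on (shape, bitmask) returns the deepest marked node of a micro-tree in $O(1)$.

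At the macro level I would track, for each top node, whether its micro-tree still holds any marked node; once the last marked node inside it is unmarked, the top node becomes \emph{dead} and is immediately unioned with its parent top node in a union-find structure maintained on the fixed macro tree. A query $\mathrm{nma}(v)$ then consists of one micro-table lookup; on escape it performs one $\mathrm{find}$ in the macro tree, followed by one deepest-marked-node lookup inside the resulting ancestor's micro-tree. An $\mathrm{unmark}(v)$ flips one bit in the micro-tree's bitmask and, if the bitmask becomes zero, performs one macro-level union.

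The main obstacle is showing that the macro-level union-find contributes only $O(1)$ amortized per operation rather than $O(\alpha(n))$. This is exactly the content of Gabow and Tarjan's analysis: because all unions follow the fixed macro tree and the macro tree has only $O(n/\log n)$ nodes, applying the same micro-macro decomposition one further level (or invoking their direct amortized argument for static-tree set-union) absorbs the inverse-Ackermann factor, leaving $O(n)$ total work across any sequence of $\mathrm{nma}$ and $\mathrm{unmark}$ operations on top of the $O(n)$ preprocessing, hence amortized $O(1)$ per operation as claimed.
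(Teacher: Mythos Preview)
The paper does not give its own proof of this statement; it is quoted as a known result from Gabow and Tarjan and used as a black box. So there is nothing in the paper to compare your argument against.

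Your sketch follows the Gabow--Tarjan micro--macro framework and the final paragraph correctly identifies why the macro level costs only $O(1)$ amortized (with $O(n/\log n)$ macro elements and $n$ operations, the inverse-Ackermann term is bounded by a constant). However, the concrete query procedure you describe is not correct as written. When the micro-table lookup at $v$'s micro-tree $M_0$ escapes, this only tells you that every ancestor of $v$ \emph{inside} $M_0$ is unmarked; $M_0$ may still contain marked nodes on other branches, so under your ``dead when empty'' criterion $M_0$ is still live and the macro $\mathrm{find}$ returns $M_0$ itself. More generally, even when the macro $\mathrm{find}$ returns a proper ancestor micro-tree $M^\ast$, the ``deepest marked node'' of $M^\ast$ need not lie on the $v$-to-root path at all, so your second table gives the wrong answer.

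The fix is to tie the macro structure to the root-path rather than to global emptiness: macro-union a micro-tree with its parent when its \emph{top node} becomes unmarked, and on reaching an ancestor micro-tree perform an \emph{nma} lookup from the specific entry node (the parent, in $T$, of the previous micro-tree's top) using your first table, not a deepest-marked lookup. One must then argue that the number of micro-tree boundaries crossed over all queries is $O(n)$ via path compression on the macro pointers; this is precisely the detail Gabow and Tarjan work out. With that correction your outline is sound.
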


The pseudo-code of our algorithm is shown in Algorithm~\ref{algo:triensvalgo}.
\begin{algorithm}
  \caption{Computing NSV on trie with values $\ISA$}\label{algo:triensvalgo}
  Preprocess trie for decremental nearest marked ancestor\;
  \ForEach{node $u$ in decreasing order of $\ISA[u]$}{
    $\mathit{unmark}(u)$\;
    $\NSV[u] = \mathit{nma}(u)$\;
  }
\end{algorithm}
\begin{theorem}
  Given a trie of size $n$,
  the longest Lyndon word that is a prefix of $\CSTsuf{u}$
  for each node $u$ can be computed in total $O(n)$ time and space.
\end{theorem}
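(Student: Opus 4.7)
The plan is to establish correctness of Algorithm~\ref{algo:triensvalgo} for computing, at each node $v$, the nearest proper ancestor $u$ whose lexicographic rank $\ISA[u]$ is smaller than $\ISA[v]$, and then to read off the longest Lyndon word from this information. Since the parent of a node in the trie is unique, the proof of Lemma~\ref{lem:lexorderAndLongestLyndon} transfers verbatim: the longest Lyndon word that is a prefix of $\CSTsuf{v}$ is $\pathstr(v,c)$, where $c$ is the child of $\NSV[v]$ on the path from $\NSV[v]$ down to $v$, so its length is $\mathrm{depth}(v)-\mathrm{depth}(\NSV[v])$ and the endpoint $c$ is recovered by one level ancestor query in $O(1)$ time using the preprocessing noted in the preliminaries.

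To prepare the inputs, I would construct the suffix tree of the trie in $O(n)$ time via Theorem~\ref{theorem:suffixTreeOfTrie} and, by a single traversal, assign each node $v$ its rank $\ISA[v]$ among the suffixes $\CSTsuf{\cdot}$. Since the $\ISA$ values form a permutation of $\{1,\ldots,n\}$, bucket sort produces the list of nodes in decreasing $\ISA$-order in $O(n)$ time and space. Next, I would set up the Gabow--Tarjan decremental nearest marked ancestor structure on the trie in $O(n)$ time, with every node initially marked.

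The main loop iterates over nodes in this precomputed order; for each $v$ it unmarks $v$ and then sets $\NSV[v] := \mathit{nma}(v)$. For correctness, observe that when $v$ is handled, the unmarked nodes are exactly $v$ together with every $u$ satisfying $\ISA[u] > \ISA[v]$ (these were processed earlier). Because $\ISA$ is a permutation, the still-marked ancestors of $v$ are precisely the proper ancestors $u$ with $\ISA[u] < \ISA[v]$, so $\mathit{nma}(v)$ returns exactly $\NSV[v]$. Each node triggers one unmark and one query, so the main loop costs $O(n)$ amortized time by the Gabow--Tarjan bound; together with the preceding $O(n)$ steps the overall time and space are $O(n)$.

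The only delicate point is the ordering of operations inside the loop: unmarking $v$ \emph{before} the query is what prevents $\mathit{nma}(v)$ from returning $v$ itself and thereby enforces the strict inequality needed for $\NSV$. Beyond this, the argument is a direct reduction to Theorem~\ref{theorem:suffixTreeOfTrie} and the decremental nearest marked ancestor black box, so no new combinatorial obstacle arises.
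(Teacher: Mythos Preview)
Your proposal is correct and follows essentially the same approach as the paper: process nodes in decreasing order of $\ISA$, use the Gabow--Tarjan decremental nearest marked ancestor structure to obtain $\NSV[v]$, and read off the Lyndon word via a level-ancestor query. You simply spell out in more detail the sorting, the correctness invariant, and the recovery of the endpoint, all of which the paper leaves implicit.
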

\begin{proof}
  It is easy to see the linear running time of Algorithm~\ref{algo:triensvalgo}.
  The correctness is also easy to see, because the nodes are processed in decreasing order of lexicographic rank,
  and thus, all and only nodes with larger lexicographic rank are unmarked.
\end{proof}
\subsection{Computing Runs}\label{sec:computingRuns}
To compute all runs in a trie, we extend the algorithm for strings to the trie case.
After computing the longest Lyndon word that is a prefix of $\CSTsuf{u}$ for each node $u$
for the two lexicographic orderings $\prec_0$ and $\prec_1$,
we must next see if they are \Lroot{}s of runs by checking
how long the periodicity extends.
Given a longest Lyndon word $y = \pathstr(u,v)$ w.r.t. $\prec_\ell$ that starts at $u$,
we can compute the longest common extension from nodes $u$ and $v$ towards the root, i.e.,
the longest common prefix $z$ between $\CSTsuf{u}$ and $\CSTsuf{v}$.
To avoid outputting duplicate runs, we will process only the left-most (deepest)
\Lroot{} of each run.
This can be done by processing the \Lroot{}s from the leaves.
Whenever $|z| \geq |y|$, this implies that the longest Lyndon word starting at $v$ is also $y$,
so we mark node $v$ as not being the left-most \Lroot{} for $\ell$.
Using the suffix tree of the trie, this longest common extension query
can be computed in constant time after linear time preprocessing, since it amounts to lowest common ancestor queries (e.g.~\cite{DBLP:conf/latin/BenderF00}).
The central difficulty of our problem is in computing the longest common extension in the opposite direction,
i.e. towards the leaves,
because the paths can be branching.
We cannot solve this problem by simply considering
longest common extensions on
the common suffix trie for the reverse strings, since, as observed in Section~\ref{sec:Preliminaries},
this can lead to a quadratic blow-up in the size of the trie.

LCE queries on tries in the leaf direction, called path-tree queries,
were considered by Bille et al.~\cite{DBLP:journals/tcs/BilleGGLW16}.
In a path-tree LCE query,
we are given a path $\triepath{u}{v}$, where $u$ is an ancestor of $v$, and a node $w$
and are asked to return the longest common prefix between $\pathstr(u,v)$
and any path from $w$ to a descendant leaf.
For technical reasons,
we will represent the above path-tree LCE query as $\PT{|\pathstr(u,v)|}{v}{w}$,
i.e.,
we represent the path with its length and deeper end $v$.
The output of the query is the node $w'$ that is a descendant of $w$
such that $\pathstr(w,w')$ is the longest common prefix.

Bille et al. showed the following result:
\begin{theorem}[Theorem 2 of~\cite{DBLP:journals/tcs/BilleGGLW16}]\label{thm:origPT}
  For a tree T with n nodes, a data structure of size $O(n)$ can be constructed in $O(n)$ time to answer path–tree LCE queries in
  $O ((\log \log n)^2 )$ time.
\end{theorem}
In our case,
for each (deepest) \Lroot{} $\pathstr(v,u)$\footnote{We write $\pathstr(v,u)$ here, since its reverse $\pathstr(u,v)$ is not a Lyndon word} to be processed,
we simply call one path-tree LCE query $\PT{|\pathstr(u,v)|}{v}{v}$,
as we know this value will be less than $|\pathstr(u,v)|$.
Since there are $O(n)$ candidate \Lroot{}s,
the computation can be done in total $O(n(\log\log n)^2)$ time.

Finally, we show that this can be reduced to $O(n\log\log n)$ total time
since all the $O(n)$ queries are known in advance,
and Bille et al.'s approach can be modified to process a batch of $O(n)$ queries more efficiently.

Our algorithm closely follows the approach of~\cite{DBLP:journals/tcs/BilleGGLW16}.
For any query $\PT{l}{v}{w}$, their approach iteratively reduces the query to another query $\PT{l'}{v'}{w'}$ with equivalent output,
such that $l'$ becomes smaller compared to $l$. More precisely, they show:
\begin{lemma}[Lemma~4 in~\cite{DBLP:journals/tcs/BilleGGLW16}]\label{lem:origQueryReduce}
  For a tree T with n nodes and a parameter b, a data structure of size $O(n)$ can be constructed in $O(n)$ time\footnote{In the original paper, the statement is $O(n\log n)$ time, but is apparently a typo.} time,
  so that given a path of length
  $l \leq b$ ending at $u \in T$ and a subtree rooted at $v \in T$ we can reduce the query in $O(\log\log n)$ time so that the path is of length at most $b^{\frac{4}{5}}$.
\end{lemma}

A weaker version of Theorem~\ref{thm:origPT} where the size and construction time is $O(n\log\log n)$
can be obtained by building and applying the data structure of Lemma~\ref{lem:origQueryReduce}
$O(\log\log n)$ times
for $b = n, n^{4/5}, n^{(4/5)^2}, \dots, 1$.
In our case,
we process a batch of $O(n)$ queries for each value of $b$ so the total space usage can be suppressed to $O(n)$.
To achieve $O(n\log\log n)$ total running time,
we shall prove the following Lemma.
Note that the main difference is essentially in that
while the original approach answered predecessor/successor queries for each tree-path query using predecessor/successor data structures (thus requiring an extra factor of $\log\log n$), we are able to answer them by sorting the data and all of the tree-path queries at once using radix sort.
\begin{lemma}\label{lem:newQueryReduce}
  For a tree T with n nodes, and parameter b,
  a set \[\{\PT{l_1}{v_1}{w_1}, \dots, \PT{l_k}{v_k}{w_k}\}\] of $k = O(n)$ path-tree LCE queries
  with $l_i \leq b~(i=1,\dots k)$ can be reduced to a set
  \[\{ \PT{l'_1}{v'_1}{w'_1}, \dots, \PT{l'_k}{v'_k}{w'_k}\}\]
  of queries such that
  $\PT{l_i}{v_i}{w_i} = \PT{l'_i}{v'_i}{w'_i}$ and $l'_i \leq b^{4/5}$ for $i = 1, \dots, k$
  in $O(n)$ time.
\end{lemma}

The queries are processed as follows:
The first step is to determine whether the length of the LCE is less than $b^{4/5}$ or not for all $i=1,\dots, k$.
To do this, we assign new names (i.e., integers of size $O(n)$)
to all distinct paths of length $b^{4/5}$ in the trie.
This can be done in $O(n)$ time using Theorem~\ref{theorem:suffixTreeOfTrie}.
Let
\[S_1 = \{(s_t,c_t,t) \mid t\in T, s_t = \Ancestor{t}{b^{4/5}}\}, \]
where
$c_t$ is the name assigned to the length $b^{4/5}$ path from $s_t$ to $t$.
For path-tree query $\PT{l_i}{v_i}{w_i}$, let $e_i = \Ancestor{v_i}{l_i-b^{4/5}}$, that is,
$c_{e_i}$ is the name assigned to the length $b^{4/5}$ prefix of the query path.
Then, it is not difficult to see that
the length of the LCE is at least $b^{4/5}$ if and only if
$(w_i,c_{e_i},t_i) \in S_1$ for some $t_i$, and in such case,
we have $\pathstr(\Ancestor{v_i}{l_i},e_i) = \pathstr(w_i,t_i)$.
We can determine such $t$, if it exists,
in $O(n)$ total time for all $i=1,\dots,k$ as follows.
Let \[S_2 = \{(w_i,c_{e_i},\bot ) \mid i \in \{1,\dots,k\}\},\]
and sort the set $S_1\cup S_2$ (assuming $\bot$ is larger than any node)
in $O(n)$ time by radix sort.
Then, if it exists,
$(w_i,c_{e_i},t_i)$ must be the largest element in $S_1$ that precedes element $(w_i,c_{e_i},\bot) \in S_2$.
This can also be computed for all $i$ in $O(n)$ total time by simple scans on the sorted set.

For queries $\PT{l_i}{v_i}{w_i}$ such that the length of the LCE is less than $b^{4/5}$,
we can simply choose $l'_i=b^{4/5}$, $v'_i=\Ancestor{v_i}{l_i - b^{4/5}}$, and $w'_i=w_i$.
Otherwise,
let $l''_i=l_i - b^{4/5}$, $v''_i=v_i$, and $w''_i = t_i$.
It is clear we have $\PT{l_i}{v_i}{w_i} = \PT{l''_i}{v''_i}{w''_i}$.
If $l''_i < b^{4/5}$, then we can simply use $l'_i = l''_i$, $v'_i=v''_i$, and $w'_i = w''_i$.
If $l''_i \geq b^{4/5}$, we further reduce the query using difference covers for trees,
similarly to what was done in Lemma~\ref{lem:origQueryReduce}, described below.
\begin{lemma}[Lemma~1 and Remark of~\cite{DBLP:journals/tcs/BilleGGLW16}]\label{lem:differenceCover}
  For any tree $T$ with $n$ nodes and a parameter $x$, it is possible to mark
  $2n/x$ nodes of T, so that for any two nodes $u,v \in T$ at (possibly different) depths at least $x$, there exists $d \leq x$ such that the $d$-th ancestors of both $u$ and $v$ are marked. Furthermore, such $d$ can be calculated in $O(1)$ time and the set of marked nodes can be determined in $O(n)$ time.
  Also, if a node is marked, so are all its descendants at distance $x^2$,
  and if the depth of the node is at least $x^2$, so is its $x^2$-th ancestor at distance.
\end{lemma}

We use Lemma~\ref{lem:differenceCover} for $x=b^{2/5}$,
and further modify the queries so that we can start the LCE at marked nodes: i.e., for all $i$ such that $l''_i \geq b^{4/5}$,
$v'''_i = v''_i$,
$l'''_i = l''_i + d_i$,
$w'''_i = \Ancestor{w''_i}{d_i}$,
where $d_i$ can be obtained in $O(1)$
time for each query.
Consider all paths (called {\em canonical paths}) whose length is a multiple of $x^2$
and end (and consequently start) at a marked node.
Since any path has length at most length $n$,
there can be up to $\sqrt{x}$ canonical paths that end at a given node.
Since there are $2n/x$ marked nodes, there are at most
$2n/\sqrt{x}$ canonical paths.
As was done for $b^{4/5}$ length paths, we rename all length $x^2$ paths using a suffix tree in $O(n)$ time.
Then, if we consider canonical paths as sequences of these names,
the total length for all canonical paths is $O(n)$ since each of them has length at most $\sqrt{x}$.
Thus, we can sort all (renamed) canonical paths in $O(n)$ time
(e.g. Lemma~8.7 of~\cite{makinen_belazzougui_cunial_tomescu_2015}).
We can also compute in $O(n)$ total time, the longest common prefix between lexicographically
adjacent (renamed) canonical paths, and a range minimum query data structure so that
the longest common prefix between any two (renamed) canonical paths can be computed in constant time.
Let
\[
  S_3 = \{ (s_{t,j}, c_{t,j}, t) \mid
  t\in T, j \in \{1,\dots, \sqrt{x}\}, s_{t,j} = \Ancestor{t}{j\cdot x^2}
  \}
\]
where $c_{t,j}$  is the lexicographic rank of the canonical path $\triepath{s_{t,j}}{t}$.
For any query $\PT{l'''_i}{v'''_i}{w'''_i}$ that is left,
let $c'''_i$ be the lexicographic rank of the longest (renamed) canonical path
$\triepath{\Ancestor{v'''_i}{l'''_i}}{q_i}$ that is a prefix of
$\triepath{\Ancestor{v'''_i}{l'''_i}}{v'''_i}$.
Let
\[
  S_4 = \{ (w'''_i,c'''_i,\bot) \mid  i \in\{1,\dots, k\} \}.
\]
We can sort $S_3\cup S_4$ in $O(n)$ time using radix sort.
Then, for each element $(w'''_i,c'''_i,\bot) \in S_4$, either its predecessor or successor
$(s_{t,j},c_{t,j}, t) \in S_3$
(which can be computed in $O(n)$ total time for all of them by simple scans on the sorted set),
will have the longest common prefix, provided that $s_{t,j} = w'''_i$.
This can be computed in constant time each using range minimum queries.
Since we have computed the longest common extension where each character (name)
represents a length $x^2$ substring, the remaining length of the LCE is at most $x^2 = b^{4/5}$.

From the above arguments, we have proved Lemma~\ref{lem:newQueryReduce} and obtain the following.
\begin{theorem}\label{thm:newResult}
  For a tree T with n nodes, a batch of $O(n)$ tree-path LCE queries can be
  computed in $O(n\log\log n)$ total time.
\end{theorem}

 \section{Conclusion}
We generalized the notion of runs in strings to runs in tries,
and showed that the analysis of the maximum number of runs,
as well as algorithms for computing runs can be extended and adapted to
the trie case, but with a slight increase in running time.

Our algorithm can output all primitively rooted squares in $O(n\log n)$ time, which is tight,
since there can be $\Theta(n\log n)$ primitively rooted squares in a string, e.g., Fibonacci words~\cite{lothaire_2005},
and thus in a trie.

An obvious open problem is whether there exists a linear time algorithm for computing all runs in a trie.
For strings, there exists another linear time algorithm for computing all runs that is based on the Lempel-Ziv parsing~\cite{DBLP:conf/focs/KolpakovK99}. It is not clear how this algorithm could be extended to the case of tries.
The case for general ordered alphabets, instead of integer alphabets,
is another open problem which was recently resolved for strings by Ellert and Fischer~\cite{DBLP:journals/corr/abs-2102-08670}.
 \section*{Acknowledgements}
We thank Tomohiro I for pointing out the paper of~\cite{DBLP:journals/tcs/BilleGGLW16}, which enabled us to reduce the time complexity of our initial solutions.

This work was supported by JSPS KAKENHI Grant Numbers JP18K18002 (YN), JP17H01697 (SI), JP20H04141 (HB), JP18H04098 (MT), JST ACT-X Grant Number JPMJAX200K (YN), and JST PRESTO Grant Number JPMJPR1922 (SI).

\bibliography{ref}

\begin{thebibliography}{10}
\expandafter\ifx\csname url\endcsname\relax
  \def\url#1{\texttt{#1}}\fi
\expandafter\ifx\csname urlprefix\endcsname\relax\def\urlprefix{URL }\fi
\expandafter\ifx\csname href\endcsname\relax
  \def\href#1#2{#2} \def\path#1{#1}\fi

\bibitem{DBLP:conf/focs/KolpakovK99}
R.~M. Kolpakov, G.~Kucherov,
  \href{https://doi.org/10.1109/SFFCS.1999.814634}{Finding maximal repetitions
  in a word in linear time}, in: 40th Annual Symposium on Foundations of
  Computer Science, {FOCS} '99, 17-18 October, 1999, New York, NY, {USA},
  {IEEE} Computer Society, 1999, pp. 596--604.
\newblock \href {https://doi.org/10.1109/SFFCS.1999.814634}
  {\path{doi:10.1109/SFFCS.1999.814634}}.
\newline\urlprefix\url{https://doi.org/10.1109/SFFCS.1999.814634}

\bibitem{DBLP:journals/siamcomp/BannaiIINTT17}
H.~Bannai, T.~I, S.~Inenaga, Y.~Nakashima, M.~Takeda, K.~Tsuruta,
  \href{https://doi.org/10.1137/15M1011032}{The ``runs'' theorem}, {SIAM} J.
  Comput. 46~(5) (2017) 1501--1514.
\newblock \href {https://doi.org/10.1137/15M1011032}
  {\path{doi:10.1137/15M1011032}}.
\newline\urlprefix\url{https://doi.org/10.1137/15M1011032}

\bibitem{DBLP:journals/corr/abs-2102-08670}
J.~Ellert, J.~Fischer, \href{https://arxiv.org/abs/2102.08670}{Linear time runs
  over general ordered alphabets}, CoRR abs/2102.08670 (2021).
\newblock \href {http://arxiv.org/abs/2102.08670} {\path{arXiv:2102.08670}}.
\newline\urlprefix\url{https://arxiv.org/abs/2102.08670}

\bibitem{sugahara_et_al:LIPIcs:2019:10494}
R.~Sugahara, Y.~Nakashima, S.~Inenaga, H.~Bannai, M.~Takeda,
  \href{http://drops.dagstuhl.de/opus/volltexte/2019/10494}{{Computing Runs on
  a Trie}}, in: N.~Pisanti, S.~P. Pissis (Eds.), 30th Annual Symposium on
  Combinatorial Pattern Matching (CPM 2019), Vol. 128 of Leibniz International
  Proceedings in Informatics (LIPIcs), Schloss Dagstuhl--Leibniz-Zentrum fuer
  Informatik, Dagstuhl, Germany, 2019, pp. 23:1--23:11.
\newblock \href {https://doi.org/10.4230/LIPIcs.CPM.2019.23}
  {\path{doi:10.4230/LIPIcs.CPM.2019.23}}.
\newline\urlprefix\url{http://drops.dagstuhl.de/opus/volltexte/2019/10494}

\bibitem{KOCIUMAKA201460}
T.~Kociumaka, J.~Pachocki, J.~Radoszewski, W.~Rytter, T.~Walen, {Efficient
  counting of square substrings in a tree}, Theoretical Computer Science 544
  (2014) 60--73.

\bibitem{DBLP:conf/cpm/CrochemoreIKKRRTW12}
M.~Crochemore, C.~S. Iliopoulos, T.~Kociumaka, M.~Kubica, J.~Radoszewski,
  W.~Rytter, W.~Tyczynski, T.~Walen,
  \href{https://doi.org/10.1007/978-3-642-31265-6\_3}{The maximum number of
  squares in a tree}, in: J.~K{\"{a}}rkk{\"{a}}inen, J.~Stoye (Eds.),
  Combinatorial Pattern Matching - 23rd Annual Symposium, {CPM} 2012, Helsinki,
  Finland, July 3-5, 2012. Proceedings, Vol. 7354 of Lecture Notes in Computer
  Science, Springer, 2012, pp. 27--40.
\newblock \href {https://doi.org/10.1007/978-3-642-31265-6\_3}
  {\path{doi:10.1007/978-3-642-31265-6\_3}}.
\newline\urlprefix\url{https://doi.org/10.1007/978-3-642-31265-6\_3}

\bibitem{DBLP:journals/algorithmica/KociumakaRRW17}
T.~Kociumaka, J.~Radoszewski, W.~Rytter, T.~Walen,
  \href{https://doi.org/10.1007/s00453-016-0271-3}{String powers in trees},
  Algorithmica 79~(3) (2017) 814--834.
\newblock \href {https://doi.org/10.1007/s00453-016-0271-3}
  {\path{doi:10.1007/s00453-016-0271-3}}.
\newline\urlprefix\url{https://doi.org/10.1007/s00453-016-0271-3}

\bibitem{DBLP:conf/stringology/FunakoshiNIBT19}
M.~Funakoshi, Y.~Nakashima, S.~Inenaga, H.~Bannai, M.~Takeda,
  \href{http://www.stringology.org/event/2019/p02.html}{Computing maximal
  palindromes and distinct palindromes in a trie}, in: J.~Holub,
  J.~Zd{\'{a}}rek (Eds.), Prague Stringology Conference 2019, Prague, Czech
  Republic, August 26-28, 2019, Czech Technical University in Prague, Faculty
  of Information Technology, Department of Theoretical Computer Science, 2019,
  pp. 3--15.
\newline\urlprefix\url{http://www.stringology.org/event/2019/p02.html}

\bibitem{DBLP:journals/jcss/CrochemoreIKRRW12}
M.~Crochemore, C.~S. Iliopoulos, M.~Kubica, J.~Radoszewski, W.~Rytter,
  T.~Walen, \href{https://doi.org/10.1016/j.jcss.2011.12.005}{The maximal
  number of cubic runs in a word}, J. Comput. Syst. Sci. 78~(6) (2012)
  1828--1836.
\newblock \href {https://doi.org/10.1016/j.jcss.2011.12.005}
  {\path{doi:10.1016/j.jcss.2011.12.005}}.
\newline\urlprefix\url{https://doi.org/10.1016/j.jcss.2011.12.005}

\bibitem{DBLP:journals/tcs/CrochemoreIKRRW14}
M.~Crochemore, C.~S. Iliopoulos, M.~Kubica, J.~Radoszewski, W.~Rytter,
  T.~Walen, \href{https://doi.org/10.1016/j.tcs.2013.11.018}{Extracting powers
  and periods in a word from its runs structure}, Theor. Comput. Sci. 521
  (2014) 29--41.
\newblock \href {https://doi.org/10.1016/j.tcs.2013.11.018}
  {\path{doi:10.1016/j.tcs.2013.11.018}}.
\newline\urlprefix\url{https://doi.org/10.1016/j.tcs.2013.11.018}

\bibitem{GABOW1985209}
H.~N. Gabow, R.~E. Tarjan, {A linear-time algorithm for a special case of
  disjoint set union}, Journal of Computer and System Sciences 30~(2) (1985)
  209--221.

\bibitem{DBLP:conf/cpm/FischerH06}
J.~Fischer, V.~Heun, \href{https://doi.org/10.1007/11780441\_5}{Theoretical and
  practical improvements on the {RMQ}-problem, with applications to {LCA} and
  {LCE}}, in: M.~Lewenstein, G.~Valiente (Eds.), Combinatorial Pattern
  Matching, 17th Annual Symposium, {CPM} 2006, Barcelona, Spain, July 5-7,
  2006, Proceedings, Vol. 4009 of Lecture Notes in Computer Science, Springer,
  2006, pp. 36--48.
\newblock \href {https://doi.org/10.1007/11780441\_5}
  {\path{doi:10.1007/11780441\_5}}.
\newline\urlprefix\url{https://doi.org/10.1007/11780441\_5}

\bibitem{DBLP:journals/tcs/BilleGGLW16}
P.~Bille, P.~Gawrychowski, I.~L. G{\o}rtz, G.~M. Landau, O.~Weimann,
  \href{https://doi.org/10.1016/j.tcs.2015.08.009}{Longest common extensions in
  trees}, Theor. Comput. Sci. 638 (2016) 98--107.
\newblock \href {https://doi.org/10.1016/j.tcs.2015.08.009}
  {\path{doi:10.1016/j.tcs.2015.08.009}}.
\newline\urlprefix\url{https://doi.org/10.1016/j.tcs.2015.08.009}

\bibitem{DBLP:journals/tcs/Breslauer98}
D.~Breslauer, \href{https://doi.org/10.1016/S0304-3975(96)00319-2}{The suffix
  tree of a tree and minimizing sequential transducers}, Theor. Comput. Sci.
  191~(1-2) (1998) 131--144.
\newblock \href {https://doi.org/10.1016/S0304-3975(96)00319-2}
  {\path{doi:10.1016/S0304-3975(96)00319-2}}.
\newline\urlprefix\url{https://doi.org/10.1016/S0304-3975(96)00319-2}

\bibitem{DBLP:journals/tcs/BenderF04}
M.~A. Bender, M.~Farach{-}Colton,
  \href{https://doi.org/10.1016/j.tcs.2003.05.002}{The level ancestor problem
  simplified}, Theor. Comput. Sci. 321~(1) (2004) 5--12.
\newblock \href {https://doi.org/10.1016/j.tcs.2003.05.002}
  {\path{doi:10.1016/j.tcs.2003.05.002}}.
\newline\urlprefix\url{https://doi.org/10.1016/j.tcs.2003.05.002}

\bibitem{DBLP:journals/ajc/Simpson10}
J.~Simpson,
  \href{http://ajc.maths.uq.edu.au/pdf/46/ajc\_v46\_p129.pdf}{Modified padovan
  words and the maximum number of runs in a word}, Australas. {J} Comb. 46
  (2010) 129--146.
\newline\urlprefix\url{http://ajc.maths.uq.edu.au/pdf/46/ajc\_v46\_p129.pdf}

\bibitem{url:lowerbounds}
W.~Matsubara, K.~Kusano, A.~Ishino, H.~Bannai, A.~Shinohara, {Lower Bounds for
  the Maximum Number of Runs in a String},
  \url{https://www.iss.is.tohoku.ac.jp/runs/}.

\bibitem{enwiki:942710995}
{Wikipedia contributors}, All nearest smaller values --- {Wikipedia}{,} the
  free encyclopedia,
  \url{https://en.wikipedia.org/w/index.php?title=All_nearest_smaller_values&oldid=942710995},
  [Online; accessed 23-March-2021] (2020).

\bibitem{DBLP:conf/isaac/Shibuya99}
T.~Shibuya, \href{https://doi.org/10.1007/3-540-46632-0\_24}{Constructing the
  suffix tree of a tree with a large alphabet}, in: A.~Aggarwal, C.~P. Rangan
  (Eds.), Algorithms and Computation, 10th International Symposium, {ISAAC}
  '99, Chennai, India, December 16-18, 1999, Proceedings, Vol. 1741 of Lecture
  Notes in Computer Science, Springer, 1999, pp. 225--236.
\newblock \href {https://doi.org/10.1007/3-540-46632-0\_24}
  {\path{doi:10.1007/3-540-46632-0\_24}}.
\newline\urlprefix\url{https://doi.org/10.1007/3-540-46632-0\_24}

\bibitem{DBLP:conf/latin/BenderF00}
M.~A. Bender, M.~Farach{-}Colton,
  \href{https://doi.org/10.1007/10719839\_9}{The {LCA} problem revisited}, in:
  G.~H. Gonnet, D.~Panario, A.~Viola (Eds.), {LATIN} 2000: Theoretical
  Informatics, 4th Latin American Symposium, Punta del Este, Uruguay, April
  10-14, 2000, Proceedings, Vol. 1776 of Lecture Notes in Computer Science,
  Springer, 2000, pp. 88--94.
\newblock \href {https://doi.org/10.1007/10719839\_9}
  {\path{doi:10.1007/10719839\_9}}.
\newline\urlprefix\url{https://doi.org/10.1007/10719839\_9}

\bibitem{makinen_belazzougui_cunial_tomescu_2015}
V.~Mäkinen, D.~Belazzougui, F.~Cunial, A.~I. Tomescu, Genome-Scale Algorithm
  Design: Biological Sequence Analysis in the Era of High-Throughput
  Sequencing, Cambridge University Press, 2015.
\newblock \href {https://doi.org/10.1017/CBO9781139940023}
  {\path{doi:10.1017/CBO9781139940023}}.

\bibitem{lothaire_2005}
M.~Lothaire, Periodic Structures in Words, Encyclopedia of Mathematics and its
  Applications, Cambridge University Press, 2005, p. 430–477.
\newblock \href {https://doi.org/10.1017/CBO9781107341005.009}
  {\path{doi:10.1017/CBO9781107341005.009}}.

\end{thebibliography}

\end{document}